\definecolor{darkgreen}{rgb}{0.0,0,0.9}
\renewcommand{\P}[1]{{\mathbb{P}}\left[#1\right]}
\newcommand{\I}[1]{{\mathbb{ I}}\left[#1\right]}
\newcommand{\E}[1]{{\mathbb{E}}\left[#1\right]}
\providecommand{\norm}[1]{\left\lVert#1\right\rVert}
\declaretheorem[numberwithin=section]{theorem}
\declaretheorem[sibling=theorem]{lemma}
\declaretheorem[sibling=theorem]{proposition}
\declaretheorem[sibling=theorem]{claim}
\declaretheorem[sibling=theorem]{corollary}
\declaretheorem[sibling=theorem]{definition}
\newenvironment{proofof}[1]{{\medbreak\noindent \em Proof of #1.  }}{\hfill\qed\medbreak}
\def\bx{{\bf x}}
\def\b1{{\bf 1}}
\def\eps{{\epsilon}}
\def\cP{{\cal P}}
\def\os{\tilde{s}^*}
\def\ot{\tilde{t}^*}
\def\size{\sigma}
\def\v{\Delta}
\newcommand{\D}[1]{{#1_{\mathcal{D}}}}
\DeclareMathOperator{\CUT}{CUT}
\DeclareMathOperator{\poly}{poly}
\DeclareMathOperator{\trace}{trace}
\begin{document}

\title{A New Regularity Lemma and \\
Faster Approximation Algorithms for Low Threshold Rank Graphs}
\author{Shayan Oveis Gharan\thanks{Department of Management Science and Engineering, Stanford University. Supported by a Stanford Graduate Fellowship. Email:\protect\url{shayan@stanford.edu}.}
\and
Luca Trevisan\thanks{Department of Computer Science, Stanford University. This material is based upon  work supported by the National Science Foundation under grant No.  CCF 1017403.
Email:\protect\url{trevisan@stanford.edu}.}
}

\date{}
\maketitle

\begin{abstract}
Kolla and Tulsiani \cite{KT07,Kolla11} and Arora, Barak and Steurer \cite{ABS10} introduced the technique of {\em subspace enumeration}, which gives approximation algorithms for graph problems such as unique games and small set expansion; the running time of such algorithms is exponential in the {\em threshold-rank} of the graph.

Guruswami and Sinop \cite{GS11,GS12}, and Barak, Raghavendra, and Steurer \cite{BRS11} developed an alternative approach to the design of approximation algorithms for graphs of bounded threshold-rank, based on semidefinite programming
relaxations in the Lassere hierarchy and on novel rounding techniques. These algorithms  are faster than the ones based on subspace enumeration and work on a broad class of problems.

In this paper we develop a third approach to the design of such algorithms. We show, constructively, that graphs of bounded threshold-rank satisfy a {\em weak Szemeredi regularity lemma} analogous to the one proved by Frieze and Kannan \cite{FK99} for dense graphs. The existence of efficient approximation algorithms is then a consequence of the regularity lemma, as shown by Frieze and Kannan.

Applying our method to the Max Cut problem, we devise an algorithm that is faster than all previous algorithms, and is easier to describe and analyze.
\end{abstract}

\maketitle
\section{Introduction}

Kolla and Tulsiani \cite{KT07,Kolla11} and Arora, Barak and Steurer \cite{ABS10} proved that
the Unique Games problem can be approximated efficiently if the adjacency matrix of a graph
associated with the problem has few large eigenvalues; they show that, for every optimal
solution, its indicator vector is close to the subspace spanned by the eigenvectors of the
large eigenvalues, and one can find a solution close to an optimal one by enumerating
an $\epsilon$-net for such a subspace. Such {\em subspace enumeration} algorithm runs
in time exponential in the dimension of the subspace, which is the number of large eigenvalues;
such a parameter is called the {\em threshold rank} of the graph. Arora, Barak and Steurer
show that the subspace enumeration algorithm can approximate other graph problems, in regular graphs, in time
exponential in the threshold rank, including the Uniform Sparsest Cut problem, the Small-Set Expansion
problem and the Max Cut problem\footnote{The subspace enumeration algorithm
does not improve the 0.878 approximation guarantee of Goemans, Williamson \cite{GW95},
but it finds a solution of approximation factor $1-O(\eps)$ if the optimum cuts at least $1-\eps$.
fraction of edges.}.

Barak, Raghavendra and Steurer \cite{BRS11} and Guruswami and Sinop \cite{GS11,GS12,GS13} developed an alternative approach to the design of approximation algorithms running in time exponential in the threshold rank. Their algorithms are based on solving semidefinite programming relaxations from the Lasserre hierarchy and then applying sophisticated rounding schemes. The advantage of this approach
is that it is applicable to a more general class of graph problems and constraint satisfaction problems, that the approximation guarantee has a tighter dependency on the threshold used in the definition of threshold rank and that, in same cases, the algorithms have a running time of $f(k,\epsilon) \cdot  n^{O(1)}$ where $k$ is the threshold rank and $1\pm \epsilon$ is the approximation guarantee, 
instead of the running time of $n^{O(k)}$ which follows from an application of the subspace enumeration 
algorithm for constant $\epsilon$.

In this paper we introduce a third approach to designing algorithms for graphs of bounded
threshold rank, which is based on proving a {\em weak Szemeredi regularity lemma} for such graphs.

The regularity lemma of Szemeredi  \cite{Szemeredi78} states that every dense graph can be well
approximated by the union of a constant number of bipartite complete subgraphs; the constant, however, has a tower-of-exponentials dependency on the quality of approximation. Frieze and Kannan \cite{FK96,FK99} prove what they call a {\em weak regularity lemma}, showing that every dense
graph can be approximated up to an error $\epsilon n^2$ in the cut norm by a linear combination of
$O(1/\epsilon^2)$ cut matrices (a cut matrix is a bipartite complete subgraph) with bounded coefficients. Frieze and Kannan
also show that such an approximation can be constructed ``implicitly'' in time polynomial in  $1/\epsilon$
and that, for a weighted graph which is a linear combination of $\sigma$ cut matrices, several graph problems can be approximated in time $\exp(\tilde O(\sigma)) + \poly(n)$ time. Combining the two facts one has
a $\exp(\poly(1/\epsilon)) + \poly(n)$ time approximation algorithm for many graph problems on dense graphs. 

We prove that a weak regularity lemma holds for all graphs of bounded threshold rank. Our result is a proper generalization of the weak regularity lemma of Frieze and Kannan, because dense graphs are known to have bounded threshold rank\footnote{The normalization
used for dense graphs is different.}.
For a (weighted) $G = (V,E)$ with adjacency matrix $A$, and diagonal matrix of vertex degrees $D$, $D^{-1/2}AD^{-1/2}$ is called the normalized adjacency matrix of $G$. If the square sum of the eigenvalues of the normalized adjacency matrix  outside the range $[-\epsilon/2,\epsilon/2]$ is equal to $k$ (in particular, if there are at most $k$ such eigenvalues), then
we show that there is a linear combination of $O(k/\epsilon^2)$ cut matrices that approximate $A$
up to $2\epsilon |E|$ in cut norm; furthermore, such a decomposition can be found in $\poly(n,k,1/\eps)$ time. (See \autoref{thm:matrixdec} below.)  Our regularity lemma, combined with an improvement of the
Frieze-Kannan approximation algorithm for graphs that are linear combination of cut matrices, gives
us algorithms of running time $2^{\tilde O(k^{1.5}/\epsilon^3)} + \poly(n)$ for several graph problems
on graphs of  threshold rank $k$, providing an additive approximation of $2\epsilon |E|$. In problems such as Max Cut in which the optimum is $\Omega(|E|)$, this additive approximation is equivalent to
a multiplicative approximation.

\begin{table}[htb]
\begin{tabular}{|c|c|c|}
\hline
Reference & Running time & Parameter $k$\\
\cite{BRS11}* & $2^{O(k/\epsilon^4)}\cdot \poly(n) $ & \# of eigenvalues not in range $[-c\cdot\epsilon^2,c \cdot \epsilon^2]$, $c>0$\\
\cite{GS11}  &  $n^{O(k/\epsilon^2)} $ & \# of eigenvalues $\leq - \epsilon/2$\\
\cite{GS12} & $2^{k/\epsilon^3} \cdot n^{O(1/\epsilon)}$ & \# of eigenvalues $\leq - \epsilon^2/2$\\
\hline
%\mbox{} & \mbox{} & \\
this paper & $2^{\tilde O(k^{1.5}/\epsilon^3)} +  \poly(n)$ & sum of squares of eigenvalues not in range $[-\epsilon/8,\epsilon/8]$\\
\hline
\end{tabular}
\caption{A comparison between previous algorithms applied to Max Cut and our algorithm. \cite{BRS11} needs to solve $r$ rounds of Lasserre hierarchy, for $r=O(k/\eps^4)$.}
\label{tab:comp}
\end{table}

\autoref{tab:comp} gives a comparison between previous algorithms applied to Max Cut and our algorithm.
The advantages over previous algorithms, besides the simplicity of the algorithm, is a faster
running time and the dependency on a potentially smaller theshold-rank parameter, because
the running time of our algorithm depends on the {\em sum of squares} of eigenvalues outside
of a certain range, rather than the number of such eigenvalues. (recall that the eigenvalues of $D^{-1/2}AD^{-1/2}$
 are in the range $[-1,1]$.)

We now give a precise statement of our results, after introducing some notation.

\section{Statement of Results}

\subsection{Notations}
Let $G=(V,E)$ be a (weighted) undirected graph with $n:=|V|$ vertices. 
Let $A$ be the adjacency matrix of $G$.
For a vertex $u\in V$, let $d(u):=\sum_{v\sim u} A(u,v)$ be the degree of $u$. For a set $S\subset V$, let $d(S)=\sum_{v\in S} d(v)$ be the summation of vertex degrees in $S$, and and let $m:=2|E|=d(V)$.
Let $D$ be the diagonal matrix of degrees. 
For any matrix $M\in\mathbb{R}^{V\times V}$, we use $\D{M}$ to denote the symmetric matrix $D^{-1/2}MD^{-1/2}$.
We call $\D{A}$ the {\em normalized adjacency matrix} of $G$.
It is straightforward to see that all eigenvalues of $\D{A}$ is contained in the interval $[-1,1]$.
%We want to
%approximate $A$ with respect to the cut norm within an additive error of $\eps m$. % where $\bar{d} = \frac{1}{n}\sum_{i=1}^n d(i)$ is the average degree of the vertices of $G$.

 For two functions $f,g\in\mathbb{R}^V$, let $\langle f,g\rangle :=\sum_{v\in V} f(v)g(v)$.
 Also, let $f\otimes g$ be the tensor product of $f,g$; i.e., the matrix in $\mathbb{R}^{V\times V}$ such that $(u,v)$ entry is $f(u)g(v)$. For a function $f\in\mathbb{R}^V$, and $S\subseteq V$
 let $f(S):=\sum_{v\in S} f(v)$.

For a set $S \subseteq V$, let $\b1_S$ be the indicator function of $S$, and let
$$ d_S(v):=\begin{cases}
d(v) & v\in S\\
0 & \text{otherwise.}
\end{cases}$$
For any two sets $S,T\subseteq V$, and $\alpha\in\mathbb{R}$,  we  use the notation $\CUT(S,T,\alpha):=\alpha d_S\otimes d_T$ to denote the matrix corresponding to the cut $(S,T)$, where    $(u,v)$ entry of the matrix is $\alpha d(u) d(v)$ if $u\in S, v\in T$ and zero otherwise. We remark that $\CUT(S,T,\alpha)$ is not necessarily a symmetric matrix. 

\begin{definition}[Matrix Norms]
For a matrix $M \in \mathbb{R}^{V\times V}$, and $S,T\subseteq V$, let $$M(S,T):=\sum_{u\in S,v\in T} M_{u,v}.$$
The Frobenius norm and the  cut norm are defined as follows:
\begin{eqnarray*}
\norm{M}_F &:=& \sqrt{\sum_{u,v} M^2_{u,v}}, \\
\norm{M}_C &:=& \max_{S,T\subseteq V} \left|M(S,T)\right|
\end{eqnarray*}
\end{definition}
\begin{definition}[Sum-Squares Threshold Rank]
For  any unweighted graph $G$, with normalized adjacency matrix $\D{A}$, let $\lambda_1,\ldots,\lambda_n$ be the eigenvalues of $\D{A}$  with the corresponding eigenfunctions $f_1,\ldots,f_n$. For $\delta>0$, the $\delta$ sum-squares threshold rank of $A$ is defined as
$$t_\delta(\D{A}) := \sum_{i: |\lambda_i| > \delta} \lambda_i^2.$$
Also, the $\delta$ threshold approximation of $A$ is defined as,
$$ T_\delta(\D{A}):=\sum_{i: |\lambda_i| > \delta} \lambda_i f_i \otimes f_i. $$
\end{definition}

\subsection{Matrix Decomposition Theorem}
The following matrix decomposition theorem is the main technical result of this paper. 
\begin{theorem}
\label{thm:matrixdec}
For any graph $G$, and $\eps>0$, let $k:=t_{\eps/2}(\D{A})$. There is a  algorithm that writes $A$ as a linear combination of  cut matrices, $W^{(1)}, W^{(2)},\ldots, W^{(\size)}$, such that $\size \leq 16k/\eps^2$, and 
$$\norm{A-W^{(1)}-\ldots-W^{(\size)}}_C \leq \eps m,$$
where each $W^{(i)}$ is a cut matrix $\CUT(S,T,\alpha)$, for some $S,T\subseteq V$, such that $|\alpha|\leq  \sqrt{k}/m$. %and $d(S),d(T)\geq  \eps^2m/(16k)$.
The running time of the algorithm is polynomial in $n,k,1/\eps$. 
\end{theorem}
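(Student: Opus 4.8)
The plan is to follow the greedy "energy increment" strategy of Frieze and Kannan, but to use the spectral structure of $\D{A}$ to control how many steps are needed. First I would work with the normalized matrix $\D{A}$ rather than $A$ directly: a cut matrix $\CUT(S,T,\alpha)=\alpha\, d_S\otimes d_T$ corresponds, after conjugating by $D^{-1/2}$, to the rank-one matrix $\alpha (D^{1/2}\b1_S)\otimes(D^{1/2}\b1_T)$, whose entries are $\alpha\sqrt{d(u)d(v)}$ on $S\times T$. So decomposing $A$ into cut matrices is equivalent to decomposing $\D{A}$ into matrices of the form $\beta\, g_S\otimes g_T$ where $g_S(u)=\sqrt{d(u)}\b1_S(u)$, and the cut norm $\norm{\cdot}_C$ of a correction $M$ translates to the quantity $\max_{S,T}\langle g_S, \D{M} g_T\rangle$ up to the degree reweighting. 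The key point is that $\norm{g_S}_2^2=d(S)\le m$, so these are vectors of bounded length.

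The iterative step: given the current residual $R=\D{A}-\sum_{i<j} (\text{rank-one pieces})$, if $\norm{R}$ in the appropriate cut-type norm exceeds $\eps$ (suitably normalized), then by definition of the norm there are sets $S,T$ with $|\langle g_S, R\, g_T\rangle|$ large, say at least $\eps m$ in the unnormalized scaling. Subtract off the cut matrix $\beta\, g_S\otimes g_T$ with $\beta$ chosen to be the "projection" coefficient $\langle g_S, R\, g_T\rangle/(\norm{g_S}_2^2\norm{g_T}_2^2)$; a standard computation shows $\norm{R}_F^2$ drops by $\langle g_S, R\, g_T\rangle^2/(\norm{g_S}_2^2\norm{g_T}_2^2)\ge (\eps m)^2/m^2=\eps^2$ in the normalized Frobenius norm. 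The crucial twist is the potential we track: instead of bounding the number of steps by $\norm{\D{A}}_F^2/\eps^2$, I would start the process not from $\D{A}$ but from $T_{\eps/2}(\D{A})$, the truncation to eigenvalues of magnitude $>\eps/2$, and argue that the tail $\D{A}-T_{\eps/2}(\D{A})$ already has cut norm at most $(\eps/2) m$ or so — because its spectral norm is at most $\eps/2$ and hence $|\langle g_S,(\text{tail})g_T\rangle|\le (\eps/2)\norm{g_S}_2\norm{g_T}_2\le (\eps/2)m$. Then the whole game is played against the matrix $T_{\eps/2}(\D{A})$, whose Frobenius norm squared is exactly $k=t_{\eps/2}(\D{A})$ (more precisely $\le k$), so the greedy process terminates after at most $k/(\eps/2)^2=4k/\eps^2$ steps, and a factor-of-a-few slack in constants gives $\size\le 16k/\eps^2$.

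For the coefficient bound, note that at every step $|\beta|=|\langle g_S,R\,g_T\rangle|/(\norm{g_S}_2^2\norm{g_T}_2^2)\le \norm{R}_F/(\norm{g_S}_2\norm{g_T}_2)$ by Cauchy–Schwarz, and $\norm{R}_F\le \norm{T_{\eps/2}(\D{A})}_F=\sqrt{k}$ while the relevant sets always have $\norm{g_S}_2\norm{g_T}_2$ bounded below after translating back to the original scaling by $m$; converting $\beta$ back to the coefficient $\alpha$ of the cut matrix $\CUT(S,T,\alpha)$ introduces the $1/m$ from the degree factors $\sqrt{d(u)d(v)}$ versus $d(u)d(v)$, yielding $|\alpha|\le \sqrt{k}/m$. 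For the running-time claim, I would invoke exactly the Frieze–Kannan machinery: one does not need to compute $\norm{R}_C$ exactly (which is NP-hard), only to either certify $\norm{R}_C\le \eps m$ or find sets witnessing $\ge (\eps/2) m$; this is an approximate max-cut-norm computation, solvable in $\poly(n)$ time by their SDP/linear-algebraic argument, and one needs the top eigenvectors of $\D{A}$ for the truncation, computable in $\poly(n)$ time. Each iteration is polynomial and there are $O(k/\eps^2)$ of them, giving total time $\poly(n,k,1/\eps)$.

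The main obstacle I anticipate is bookkeeping the normalization cleanly — keeping track of the three "scales" (the cut norm of $A$, the bilinear form against degree-weighted indicators $g_S$, and the ordinary Frobenius norm of $\D{A}$) and making sure the spectral tail bound, the energy-increment bound, and the coefficient bound all use consistent normalizations so that the advertised constants ($16k/\eps^2$, $\sqrt{k}/m$, $\eps m$) come out exactly. A secondary subtlety is that the witnessing sets found by the approximate cut-norm routine only guarantee a $(\eps/2) m$ gain rather than $\eps m$, which is why the threshold in the hypothesis is $\eps/2$ and why there is a factor $4$–$16$ slack; one has to make sure the loss from the approximate oracle and the loss from the spectral tail are both absorbed into that slack without circularity.
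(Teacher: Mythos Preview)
Your overall strategy matches the paper's: truncate $\D{A}$ to eigenvalues of magnitude $>\eps/2$, observe that the tail contributes at most $(\eps/2)m$ to the cut norm via the spectral bound $|\langle g_S,(\text{tail})g_T\rangle|\le(\eps/2)\|g_S\|\|g_T\|\le(\eps/2)m$, and then run a Frieze--Kannan energy increment against the truncation $T_{\eps/2}(\D{A})$, whose squared Frobenius norm is $k$, using an approximate cut-norm oracle (the paper invokes Alon--Naor) to find the witnessing sets. The step count $\size\le 16k/\eps^2$ and the tail bound go through as you describe.

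The gap is in the coefficient bound. With your ``projection'' choice
\[
\beta=\frac{\langle g_S,R\,g_T\rangle}{\|g_S\|_2^2\,\|g_T\|_2^2},
\quad\text{i.e.}\quad
\alpha=\frac{R(S,T)}{d(S)\,d(T)}
\]
in unnormalized coordinates, you do \emph{not} get $|\alpha|\le\sqrt{k}/m$. First, there is no extra conversion factor between $\beta$ and $\alpha$: subtracting $\beta\,g_S\otimes g_T$ from the normalized residual is exactly subtracting $\CUT(S,T,\beta)=\beta\,d_S\otimes d_T$ from the unnormalized one, so $\alpha=\beta$. Second, the only lower bound on $\|g_S\|_2\|g_T\|_2=\sqrt{d(S)d(T)}$ available from the witnessing condition $|R(S,T)|\gtrsim\eps m$ combined with the Cauchy--Schwarz estimate $|R(S,T)|\le\sqrt{k}\sqrt{d(S)d(T)}$ is $\sqrt{d(S)d(T)}\gtrsim\eps m/\sqrt{k}$, which yields only $|\alpha|\lesssim k/(\eps m)$, off by a factor $\sqrt{k}/\eps$ from the stated bound.

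The paper fixes this by taking the deliberately sub-optimal coefficient $\alpha=R(S,T)/m^2$ instead of $R(S,T)/(d(S)d(T))$. The potential then drops by
\[
2\alpha R(S,T)-\alpha^2 d(S)d(T)\;\ge\;\frac{R(S,T)^2}{m^2}
\]
(using $d(S)d(T)\le m^2$), which still suffices for $\size\le 16k/\eps^2$, while the same Cauchy--Schwarz bound $|R(S,T)|\le\sqrt{k}\sqrt{d(S)d(T)}\le\sqrt{k}\,m$ now gives $|\alpha|\le\sqrt{k}/m$ directly. This single change is the non-obvious move in the proof, and it is precisely the normalization bookkeeping you flagged as the main obstacle.
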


\subsection{Algorithmic Applications}
Our main algorithmic application of \autoref{thm:matrixdec} is the following theorem that approximates any cut on low threshold rank graphs with a running time $2^{\tilde{O}(k^{1.5}/\eps^3)} + \poly(n)$. 
\begin{theorem}
\label{thm:maxcut}
Let $G=(V,E)$, and for a given $0<\eps$, let $k:=t_{\eps/8}(\D{A})$. There is a randomized algorithm such that for any maximization or minimization problem on sets of size $\Gamma$  in time $2^{\tilde{O}(k^{1.5}/\eps^3)}+\poly(n)$ finds a  random set $S$ such that $|d(S)-\Gamma|\leq \eps m$, and with constant probability for any $S^*$ of size $d(S^*)=\Gamma$,
$$ A(S,\overline{S}) \geq A(S^*,\overline{S^*}) - \eps m$$ 
if it is a maximization problem, otherwise,
$$ A(S,\overline{S}) \leq A(S^*,\overline{S^*}) + \eps m.$$
\end{theorem}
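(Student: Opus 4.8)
The plan is to use \autoref{thm:matrixdec} to replace $A$ by a ``simple'' weighted graph and then to optimize essentially exactly on the simple graph. I would run the algorithm of \autoref{thm:matrixdec} with accuracy $\eps/4$ (so that its threshold $(\eps/4)/2=\eps/8$ is the one defining $k$), obtaining in $\poly(n,k,1/\eps)$ time cut matrices $W^{(1)},\dots,W^{(\size)}$, $\size=O(k/\eps^2)$, with $W^{(i)}=\CUT(S_i,T_i,\alpha_i)$ and $|\alpha_i|\le\sqrt k/m$, such that $\tilde A:=\sum_i W^{(i)}$ obeys $\norm{A-\tilde A}_C\le(\eps/4)m$. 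Since the cut norm bounds $|A(S,\overline S)-\tilde A(S,\overline S)|$ for \emph{every} set $S$ (it ranges over independent $S,T$, so non-symmetry of $\tilde A$ is harmless), it suffices to find $S$ with $|d(S)-\Gamma|\le\tfrac{\eps}{2}m$ whose value $\tilde A(S,\overline S)$ is within $\tfrac{\eps}{2}m$ of $\max\{\tilde A(S',\overline{S'}):|d(S')-\Gamma|\le\tfrac{\eps}{4}m\}$ (and symmetrically for minimization); combining the two cut-norm estimates, one for such an $S$ and one for an arbitrary $S^*$ with $d(S^*)=\Gamma$, then yields the stated guarantee.

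Passing to $\tilde A$ helps because
$$ \tilde A(S,\overline S)\;=\;\sum_{i=1}^{\size}\alpha_i\,\langle\b1_S,d_{S_i}\rangle\bigl(d(T_i)-\langle\b1_S,d_{T_i}\rangle\bigr), $$
so both $\tilde A(S,\overline S)$ and $d(S)=\langle\b1_S,d\rangle$ are determined by the $2\size+1$ real numbers $\bigl(\langle\b1_S,d_{S_i}\rangle,\langle\b1_S,d_{T_i}\rangle\bigr)_i$ and $\langle\b1_S,d\rangle$, each lying in $[0,m]$. I would fix a discretization step $\rho m$ with $\rho=\Theta(\eps^{3}/k^{1.5})$ --- chosen so that a per-coordinate change of $\rho m$ alters the displayed quadratic by at most $O(\size\cdot\sqrt k\cdot\rho m)=O(\eps m)$, using $|\alpha_i|\le\sqrt k/m$ and $d(S_i),d(T_i)\le m$ --- and enumerate the $(1/\rho)^{O(\size)}$ ``target profiles'' obtained by rounding each coordinate to a multiple of $\rho m$. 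For each profile $p$ I would solve a linear program for a fractional $x\in[0,1]^V$ whose $2\size+1$ inner products lie within $\rho m$ of $p$ and, if feasible, round $x$ to an indicator $\b1_{S_p}$ whose inner products are still within $O(\rho m)$ of $p$; finally output the realized $S_p$ of best value of the formula above. A near-optimal $S'$ has a profile that is enumerated up to resolution $\rho m$, the LP for that profile is feasible with witness $\b1_{S'}$, and the choice of $\rho$ guarantees the realized set loses only $O(\eps m)$ in objective value and in $|d(\cdot)-\Gamma|$, which gives correctness.

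The rounding step is where the genuine work lies and is what fixes the running time, so I expect it to be the main obstacle. A basic feasible LP solution has only $O(\size)$ fractional coordinates, but rounding them shifts a constraint $\langle x,d_{S_i}\rangle$ by up to the total degree of the rounded vertices, which is tolerable only for vertices of degree $O(\rho m)=O(\eps^{3}m/k^{1.5})$; the at most $O(k^{1.5}/\eps^{3})$ higher-degree vertices should instead have their membership folded into the outer enumeration, and it is this that contributes the $2^{\tilde O(k^{1.5}/\eps^{3})}$ factor (the profile count $(1/\rho)^{O(\size)}$ is only $2^{\tilde O(k/\eps^{2})}$ and is absorbed). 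The remaining low-degree coordinates should be rounded by an iterative or dependent-rounding scheme so that each of the $O(k/\eps^{2})$ constraints is perturbed by just $O(\rho m)$; the delicate points are making such a rounding control all constraints simultaneously rather than merely on average --- which is also where the ``randomized'' in the statement and the constant-probability guarantee enter, via a union bound over the $O(k/\eps^{2})$ constraints --- and verifying that the degree threshold separating the ``enumerate'' and ``round'' regimes can be taken as large as $\Theta(\eps^{3}m/k^{1.5})$ without the cumulative rounding error exceeding $\eps m$. The additive $\poly(n)$ term is the cost of \autoref{thm:matrixdec} together with one linear program per enumerated profile.
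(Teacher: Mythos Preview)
Your high-level plan---apply \autoref{thm:matrixdec}, reduce the objective to a quadratic in the $2\size$ numbers $d(S\cap S_i),d(\overline S\cap T_i)$, enumerate their discretizations together with the membership of the $O(k^{1.5}/\eps^3)$ high-degree vertices, then solve an LP and round---is exactly the paper's approach. Two technical choices differ, and both matter.

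\textbf{Running time.} You write the LP over $x\in[0,1]^V$, so each solve is $\poly(n)$; multiplied by the $2^{\tilde O(k^{1.5}/\eps^3)}$ enumerated guesses this gives $2^{\tilde O(k^{1.5}/\eps^3)}\cdot\poly(n)$, not the additive form in the statement. The paper first partitions the low-degree vertices into atoms $\cP$ (the common refinement of the $S_i\setminus U,T_i\setminus U$, further split so each atom has $d(P)\le\Delta$), with $|\cP|\le 2^{2\size}+m/\Delta=2^{\tilde O(k/\eps^2)}$. The LP is then over the atom variables $y_P$, so after the $\poly(n)$ preprocessing every subsequent LP solve and sample evaluation is independent of $n$.

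\textbf{Rounding.} Your target---round so that every one of the $O(\size)$ linear constraints is within $O(\rho m)$ of its value---is too strong with $\rho=\Theta(\eps^3/k^{1.5})$. A BFS has $O(\size)$ fractional (low-degree) coordinates and rounding them perturbs a constraint by up to $O(\size\cdot\rho m)$, which propagates to an $O(\size\cdot\eps m)$ error in the quadratic; independent rounding plus a union bound gives per-constraint deviation $\Theta(m\sqrt\rho)\gg\rho m$; and Spencer-type bounds only give $O(\sqrt\size\cdot\rho m)$, still too large. The paper avoids per-constraint control altogether: it rounds each atom $P$ independently with probability $y_P$ and computes the \emph{expectation} of the objective exactly,
\[
\E{W(S,\overline S)}\;=\;\sum_i\alpha_i\,\E{d(S\cap S_i)}\,\E{d(\overline S\cap T_i)}\;-\;\sum_i\alpha_i\sum_{P}y_P(1-y_P)\,d(P)^2.
\]
The first sum is pinned down by the LP constraints (so it is within $O(\eps m)$ of $A(S^*,\overline{S^*})$), and the covariance term is at most $\size\,\alpha_{\max}\,m\,\Delta=O(\eps m)$ because $d(P)\le\Delta$ and $\sum_P d(P)\le m$. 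A separate Hoeffding bound controls $d(S)$, and since $W(S,\overline S)\le m$ always, a Markov-type argument converts the expectation bound into a constant-probability guarantee. This expectation analysis is the missing idea in your outline; once you have it, both the atom partition and the degree threshold $\Delta$ fall out naturally.
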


We can use the above theorem to provide a  PTAS for maximum cut, maximum bisection, and minimum bisection problems.
\begin{corollary}
Let $G=(V,E)$, and for a given $\eps>0$, let $k:=t_{\eps/8}(\D{A})$. There is a randomized algorithm that in time $2^{\tilde{O}(k^{1.5}/\eps^3)}+\poly(n)$ finds an $\eps m$ additive approximation of the maximum cut.
\end{corollary}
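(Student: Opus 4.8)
\medskip
\noindent The plan is to read off the corollary from \autoref{thm:maxcut} by taking the objective to be the cut function $S \mapsto A(S,\overline{S})$. The one mismatch is that \autoref{thm:maxcut} is stated for problems over sets of a \emph{prescribed} total degree $\Gamma$, whereas maximum cut is unconstrained; I would bridge this by running the algorithm of \autoref{thm:maxcut} once (in fact a constant number of times, for amplification) for each target $\Gamma$ in the arithmetic progression $0,\ \eps m,\ 2\eps m,\ \ldots,\ m$, where $m=d(V)$, and outputting the set $S$ that maximizes $A(S,\overline{S})$ among all the sets returned --- note that $A(S,\overline{S})$ can be evaluated exactly in $\poly(n)$ time, so this really is the best candidate. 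There are $O(1/\eps)$ calls, each costing $2^{\tilde{O}(k^{1.5}/\eps^3)}+\poly(n)$ with $k=t_{\eps/8}(\D{A})$, so the $\poly(1/\eps)$ overhead is absorbed into the two terms and the total running time stays $2^{\tilde{O}(k^{1.5}/\eps^3)}+\poly(n)$.

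For correctness, let $S^*$ be a maximum cut with optimum value $\mathrm{OPT}=A(S^*,\overline{S^*})$, and let $\Gamma$ be the grid value nearest to $d(S^*)$, so $|\Gamma-d(S^*)|\le \eps m$. The claim is that the call of \autoref{thm:maxcut} with this target returns, with constant probability, a set $S$ with $A(S,\overline{S})\ge \mathrm{OPT}-O(\eps m)$; rescaling $\eps$ by a constant then yields the corollary. To establish the claim one wants a witness set $\tilde S$ with $d(\tilde S)$ equal (or, using the slack already present in \autoref{thm:maxcut}, close) to $\Gamma$ and with $A(\tilde S,\overline{\tilde S})\ge \mathrm{OPT}-O(\eps m)$, to which the cut guarantee of \autoref{thm:maxcut} can then be applied. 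Such a $\tilde S$ is obtained from $S^*$ by moving vertices across the cut until its total degree is adjusted from $d(S^*)$ to $\Gamma$: moving a single vertex $v$ changes $A(S,\overline{S})$ by at most $d(v)$, so a monotone sequence of single-vertex moves changing the total degree by $O(\eps m)$ changes the cut value by only $O(\eps m)$ as well.

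The step that needs genuine care --- and that I expect to be the main obstacle --- is precisely this reconciliation of the size constraint: the value we want to feed to \autoref{thm:maxcut} is $d(S^*)$ (or the degree of the adjusted set $\tilde S$), which need not be one of our $O(1/\eps)$ grid values, and a single vertex of very large degree could in principle make the degree of $S^*$ ``jump over'' the intended target. I would handle this by first brute-forcing the bipartition of the at most $2/\eps$ vertices of degree $\ge \eps m/2$ (an extra $2^{O(1/\eps)}$ factor, harmless against the running time above), so that in the residual instance the degree of a set moves in steps smaller than $\eps m/2$ and the nearest grid value is reachable up to an error that is absorbed into the $O(\eps m)$ slack; alternatively, and more simply, one invokes the fact that the guarantee of \autoref{thm:maxcut} is itself robust to an $\eps m$ perturbation of $\Gamma$ --- consistent with the fact that it returns a set whose own degree is only within $\eps m$ of $\Gamma$ --- so that applying it with the nearest grid value already suffices. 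Everything else is a black-box use of \autoref{thm:maxcut}.
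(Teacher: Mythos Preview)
Your proposal is correct and matches the paper's one-line proof: enumerate candidate sizes $\Gamma$ on an $O(\eps m)$-spaced grid and invoke \autoref{thm:maxcut} for each. The simpler alternative you mention at the end---that the guarantee of \autoref{thm:maxcut} is already robust to an $O(\eps m)$ perturbation of $\Gamma$, since feasibility of LP(1) only needs $|d(S^*)-\Gamma|\le \eps m/2$---is exactly how the paper handles the size reconciliation, so the vertex-moving argument and the brute-force over high-degree vertices are unnecessary detours.
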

\begin{proof}
We can simply guess the size of the optimum within an $\eps m/2$ additive error and then use \autoref{thm:maxcut}. 
\end{proof}

\begin{corollary}
Let $G=(V,E)$, and for a given $\eps>0$, let $k:=t_{\eps/8}(\D{A})$. For any of the maximum bisection and minimum bisection problems, there is a randomized algorithm that in time $2^{\tilde{O}(k^{1.5}/\eps^3)}+\poly(n)$ finds a cut $(S,\overline{S})$ such that $|d(S)-m/2|\leq \eps m$ and that $A(S,\overline{S})$ provides an $\eps m$ additive approximation of the optimum.
\end{corollary}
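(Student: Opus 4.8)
The plan is to obtain both the maximum bisection and the minimum bisection algorithms as direct instantiations of \autoref{thm:maxcut} with the balance parameter fixed to $\Gamma := m/2$. Since $m = d(V)$, a cut $(S,\overline S)$ is a degree-balanced bisection exactly when $d(S) = m/2 = d(\overline S)$. Thus ``maximum bisection'' is precisely the maximization problem of \autoref{thm:maxcut} over sets of size $\Gamma = m/2$, and ``minimum bisection'' is the corresponding minimization problem; no guessing of $\Gamma$ is needed (unlike in the previous corollary, where $\Gamma = d(S^*)$ had to be guessed up to $\eps m/2$).

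Concretely, I would run the algorithm of \autoref{thm:maxcut} in maximization mode for maximum bisection and in minimization mode for minimum bisection, with $\Gamma = m/2$ and with $k = t_{\eps/8}(\D A)$; its running time is exactly $2^{\tilde O(k^{1.5}/\eps^3)} + \poly(n)$, as required. The algorithm returns a random set $S$ with $|d(S) - m/2| \le \eps m$, i.e.\ a near-bisection, which is the first promised property. For the approximation guarantee, let $S^*$ be an optimal exact bisection, so that $d(S^*) = \Gamma$; feeding this particular $S^*$ into the conclusion of \autoref{thm:maxcut} yields, with constant probability, $A(S,\overline S) \ge A(S^*,\overline{S^*}) - \eps m$ in the maximization case and $A(S,\overline S) \le A(S^*,\overline{S^*}) + \eps m$ in the minimization case. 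Since $A(S^*,\overline{S^*})$ is by definition the optimum of the relevant bisection problem, $A(S,\overline S)$ is within $\eps m$ of it on the relevant side, which is the desired additive approximation. The constant success probability can be boosted by independent repetition if one wishes.

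I do not expect any genuine obstacle here; the content is just the observation that the degree-balanced bisection constraint is the special case $\Gamma = m/2$ of the size constraint in \autoref{thm:maxcut}, together with the fact that the slack $|d(S)-m/2|\le \eps m$ in that theorem is exactly the standard notion of an approximate bisection. The only mild point worth a remark is that, as is standard for additive bisection guarantees, the approximation is one-sided (a lower bound on the cut value for maximum bisection, an upper bound for minimum bisection); combined with $S$ being a near-bisection this is the meaningful reading of ``$\eps m$ additive approximation of the optimum''.
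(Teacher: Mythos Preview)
Your proposal is correct and matches the paper's own proof, which simply invokes \autoref{thm:maxcut} with $\Gamma = m/2$. The additional remarks you make about one-sidedness and boosting are accurate but not needed for the corollary as stated.
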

\begin{proof}
For the maximum/minimum bisection the optimum must have size $m/2$. So we can simply use \autoref{thm:maxcut} with $\Gamma=m/2$. 
\end{proof}

\section{Regularity Lemma for Low Threshold Rank Graphs}
In this section we prove \autoref{thm:matrixdec}.
The first step is to approximate $A$ by a low  rank matrix $B$. In the next
lemma we construct $B$ such that the value of any cut in $A$ is approximated within
an small additive error in $B$. 

\begin{lemma}
Let $A$ be the adjacency matrix of $G$. For  $0\leq \delta<1$, let
\begin{equation*}
B:= D^{1/2}T_\delta(\D{A})D^{1/2}.
\end{equation*} Then, $\norm{A-B}_C \leq \delta m$.
\label{lem:ABnorm}
\end{lemma}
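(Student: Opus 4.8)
The plan is to reduce the claim to an operator-norm bound on the ``discarded'' part of the normalized adjacency matrix. Write the spectral decomposition $\D{A} = \sum_{i=1}^n \lambda_i\, f_i\otimes f_i$ with $\{f_i\}$ orthonormal (possible since $\D{A}$ is symmetric), so that $A = D^{1/2}\D{A}D^{1/2} = \sum_i \lambda_i\,(D^{1/2}f_i)\otimes(D^{1/2}f_i)$, and by definition $B = D^{1/2}T_\delta(\D{A})D^{1/2}$. Subtracting,
$$ A - B \;=\; D^{1/2} M D^{1/2}, \qquad M \;:=\; \D{A} - T_\delta(\D{A}) \;=\; \sum_{i:\,|\lambda_i|\le\delta} \lambda_i\, f_i\otimes f_i . $$
The key observation is that $M$ is symmetric, vanishes on the span of the ``large'' eigenfunctions, and has all remaining eigenvalues bounded by $\delta$ in absolute value; hence $\norm{Mv}\le\delta\norm{v}$ for every $v$, which follows immediately from $Mv=\sum_{|\lambda_i|\le\delta}\lambda_i\langle f_i,v\rangle f_i$ and Parseval's identity.

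Next I would unfold the definition of the cut norm. Fix $S,T\subseteq V$ and set $x:=D^{1/2}\b1_S$ and $y:=D^{1/2}\b1_T$. Since $D^{1/2}$ is diagonal, $(A-B)(S,T) = \b1_S^{\top}(A-B)\b1_T = \b1_S^{\top}D^{1/2}MD^{1/2}\b1_T = \langle x, My\rangle$, so by Cauchy--Schwarz and the bound on $M$,
$$ \bigl|(A-B)(S,T)\bigr| \;=\; \bigl|\langle x, My\rangle\bigr| \;\le\; \norm{x}\cdot\norm{My} \;\le\; \delta\,\norm{x}\,\norm{y}. $$
Finally I would compute $\norm{x}^2 = \sum_{v\in S} d(v) = d(S) \le d(V) = m$ and likewise $\norm{y}^2 = d(T)\le m$, which yields $\bigl|(A-B)(S,T)\bigr|\le\delta m$ uniformly over all $S,T$, i.e. $\norm{A-B}_C\le\delta m$.

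There is no serious obstacle here; the argument is a short spectral computation. The only points requiring a little care are (i) recording that $M$ is zero on the subspace spanned by the eigenfunctions with $|\lambda_i|>\delta$ and has operator norm at most $\delta$ on the complement, and (ii) the elementary identity $\norm{D^{1/2}\b1_S}^2 = d(S)$ together with $d(S)\le m$. I would also remark in passing that the computation in fact gives the slightly sharper bound $\bigl|(A-B)(S,T)\bigr|\le\delta\sqrt{d(S)\,d(T)}$, which is what makes the estimate tight for small sets.
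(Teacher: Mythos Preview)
Your proof is correct and is essentially the same argument as the paper's: both rewrite $(A-B)(S,T)$ as $\langle D^{1/2}\b1_S,\, M\, D^{1/2}\b1_T\rangle$ with $M=\D{A}-T_\delta(\D{A})$, use that $M$ has spectral norm at most $\delta$ (the paper does this by expanding in the eigenbasis, you by stating the operator-norm bound directly), apply Cauchy--Schwarz, and finish with $\norm{D^{1/2}\b1_S}^2=d(S)\le m$. Your observation about the sharper estimate $\delta\sqrt{d(S)d(T)}$ is also implicit in the paper's chain of inequalities.
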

\begin{proof}
Let $\lambda_1,\ldots,\lambda_n$ be the eigenvalues of $\D{A}$, with the corresponding eigenfunctions $f_1,\ldots, f_n$. For any $S,T\subseteq V$, we have
\begin{eqnarray*}
\langle \b1_S, (A-B) \b1_T\rangle &=& \langle D^{1/2}\b1_S,   \D{(A - B)} D^{1/2} \b1_T\rangle\\
&=& \langle \sqrt{d_S}, (\D{A} - T_\delta(\D{A})) \sqrt{d_T}\rangle\\
%&=& \sqrt{\bd_S^T} \left( \sum_{i: |\lambda_i|\leq \delta} \lambda_i v_i v_i^T \right) \sqrt{\bd_T}\\
&\leq & \delta \sum_{i: |\lambda_i| \leq \delta}  \langle \sqrt{d_S}, f_i\rangle \langle \sqrt{d_T},f_i\rangle \\
&\leq & \delta \sqrt{\sum_{i: |\lambda_i|\leq \delta} \langle \sqrt{d_S},\bx_i\rangle^2}\cdot\sqrt{ \sum_{i:|\lambda_i|\leq \delta} \langle \sqrt{d_T},f_i\rangle^2}\\
&\leq &\delta \norm{\sqrt{d_S}} \norm{\sqrt{d_T}} \leq \delta \norm{\sqrt{d_V}}^2 = \delta m,
\end{eqnarray*}
where the second inequality follows by the Cauchy-Schwarz inequality. 
The lemma follows by noting the fact that $\norm{A-B}_C$ is the maximum of the above expression for any $S,T\subseteq V$.
\end{proof}

%In the rest of the proof we let $B:=D^{1/2} T_\delta(D^{-1/2} A D^{-1/2}) D^{1/2}$.
By the above lemma if we approximate $B$ by a linear combination of cut matrices, that also is a good approximation of $A$. Moreover, since $t_\delta(\D{A}) = t_\delta(\D{B})$, $B$ has a small sum-square threshold rank iff $A$ has a small threshold rank.

\begin{lemma}
For any graph $G$ with adjacency matrix $A$, and $\delta>0$, let $B:=D^{1/2} T_\delta(\D{A})D^{1/2}$. Then,
$$ \norm{\D{B}}_F^2 =  t_\delta(\D{A}).$$
\label{lem:froBnorm}
\end{lemma}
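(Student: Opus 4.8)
The plan is to observe that conjugating $B$ by $D^{-1/2}$ exactly undoes the conjugation by $D^{1/2}$ in the definition of $B$, so that $\D{B}$ is literally equal to the threshold approximation $T_\delta(\D{A})$, and then to compute the Frobenius norm of that spectral object directly.

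First I would unfold the definitions: $\D{B} = D^{-1/2} B D^{-1/2} = D^{-1/2} \bigl( D^{1/2} T_\delta(\D{A}) D^{1/2} \bigr) D^{-1/2} = T_\delta(\D{A})$, using only that $D^{-1/2} D^{1/2} = I$. Next I would recall that $\D{A}$ is symmetric (as noted in the Notations section), so its eigenfunctions $f_1,\ldots,f_n$ may be taken to be orthonormal in $\mathbb{R}^V$. Then I would expand $T_\delta(\D{A}) = \sum_{i:\,|\lambda_i|>\delta} \lambda_i\, f_i \otimes f_i$ and compute the Frobenius norm using the fact that the rank-one matrices $f_i \otimes f_i$ are orthonormal with respect to the Frobenius inner product: $\langle f_i \otimes f_i,\, f_j \otimes f_j\rangle = \langle f_i, f_j\rangle^2 = \mathbb{I}[i=j]$. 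Hence $\norm{\D{B}}_F^2 = \norm{T_\delta(\D{A})}_F^2 = \sum_{i:\,|\lambda_i|>\delta} \lambda_i^2$, which is exactly $t_\delta(\D{A})$ by definition.

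There is essentially no hard step here; the lemma is a bookkeeping identity. The only point requiring any care is making explicit the cancellation $\D{B} = T_\delta(\D{A})$ and invoking orthonormality of the eigenbasis of the symmetric matrix $\D{A}$ to diagonalize the Frobenius norm computation. (One could equivalently argue via the trace: $\norm{M}_F^2 = \trace(M^\top M)$, and for $M = \sum_i \lambda_i f_i \otimes f_i$ with orthonormal $f_i$ this trace equals $\sum_i \lambda_i^2$.)
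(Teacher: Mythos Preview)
Your proposal is correct and essentially matches the paper's argument: both reduce to the fact that the Frobenius norm squared of a symmetric matrix equals the sum of squares of its eigenvalues (the paper phrases this via $\trace(M^2)$, which you also mention as the alternative). You are slightly more explicit than the paper in writing out the cancellation $\D{B} = D^{-1/2} B D^{-1/2} = T_\delta(\D{A})$, which the paper leaves implicit.
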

 \begin{proof}
 The lemma follows from the fact that the frobenius norm of any matrix is equal to the summation
 of square of eigenvalues. If $\lambda_1,\ldots,\lambda_n$ are the eigenvalues of $\D{A}$, then
$$ \norm{B}_F^2 = \trace{B^2} = \sum_{|\lambda_i| >\delta} \lambda_i^2 = t_\delta(\D{A})\,.
\qedhere
$$
 \end{proof}

The next proposition is the main technical part of the proof of \autoref{thm:matrixdec}.
We show that we can write any (not necessarily symmetric) matrix $B$
as a linear combination of $O(\norm{B}_F^2/\eps^2)$ cut matrices such that
the cut norm of $B$ is preserved within an additive error of $\eps m$.
The proof builds on the existential theorem of Frieze and Kanan \cite[Theorem 7]{FK99}.

 \begin{proposition}
 \label{lem:cutdecexistence}
For any matrix $B\in \mathbb{R}^{V\times V}$,   $k=\norm{\D{B}}_F^2$, and $\eps>0$, there  exist cut matrices $W^{(1)}, W^{(2)},\ldots,W^{(\size)}$, such
 that $\size\leq 1/\eps^2$, and for all $S,T\subseteq V$,
  $$\left| \left(B-W^{(1)}-W^{(2)}-\ldots-W^{(\size)}\right)(S,T)\right| \leq \eps \sqrt{k} \sqrt{d(S)d(T)},$$  where each $W^{(i)}$ is a cut matrix $\CUT(S,T,\alpha)$, for some $S,T\subseteq V$, and $\alpha\in \mathbb{R}$. 
 \end{proposition}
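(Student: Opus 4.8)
The plan is to mimic the greedy/iterative argument of Frieze and Kannan, working with the normalized matrix $\D{B}$ rather than $B$ itself, so that the relevant quantity is the Frobenius norm $\norm{\D{B}}_F^2 = k$. First I would observe that a cut of $B$ has a clean reformulation in terms of $\D{B}$: for $S,T \subseteq V$,
$$ B(S,T) = \langle \b1_S, B\b1_T\rangle = \langle D^{1/2}\b1_S, \D{B}\, D^{1/2}\b1_T\rangle = \langle \sqrt{d_S}, \D{B}\,\sqrt{d_T}\rangle. $$
So testing $B$ against the "cut vectors" $\b1_S,\b1_T$ is the same as testing $\D{B}$ against the reweighted vectors $\sqrt{d_S},\sqrt{d_T}$, which have squared norm $d(S),d(T)$ respectively. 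This is exactly why the error bound $\eps\sqrt{k}\sqrt{d(S)d(T)}$, rather than $\eps\sqrt{k}\,|S|^{1/2}|T|^{1/2}$, is the natural target.

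Next I would run the standard potential argument on $R^{(0)} := \D{B}$. Having built $R^{(j)} = \D{B} - \sum_{i\le j}\D{W^{(i)}}$, if there still exist $S,T$ with $|\langle \sqrt{d_S}, R^{(j)} \sqrt{d_T}\rangle| > \eps\sqrt{k}\sqrt{d(S)d(T)}$, then I subtract off a rank-one correction along the directions $\sqrt{d_S}/\norm{\sqrt{d_S}}$ and $\sqrt{d_T}/\norm{\sqrt{d_T}}$ with the right coefficient; concretely, $W^{(j+1)} := \CUT(S,T,\alpha)$ for a suitable $\alpha$, so that $\D{W^{(j+1)}} = \beta\,\tfrac{\sqrt{d_S}}{\norm{\sqrt{d_S}}}\otimes\tfrac{\sqrt{d_T}}{\norm{\sqrt{d_T}}}$ with $|\beta| = |\langle \sqrt{d_S}, R^{(j)} \sqrt{d_T}\rangle|/\sqrt{d(S)d(T)} > \eps\sqrt{k}$. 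A direct expansion of $\norm{R^{(j)} - \D{W^{(j+1)}}}_F^2$ shows the Frobenius norm drops by at least $\beta^2 > \eps^2 k$ at each step (the cross term is $-2\beta\cdot\langle\sqrt{d_S},R^{(j)}\sqrt{d_T}\rangle/\sqrt{d(S)d(T)} = -2\beta^2$, and $\norm{\D{W^{(j+1)}}}_F^2 = \beta^2$, net change $-\beta^2$). Since $\norm{R^{(0)}}_F^2 = k$ and the norm stays nonnegative, the process halts after $\size \le k/(\eps^2 k) = 1/\eps^2$ steps, at which point no violating pair remains, i.e. $|(B - \sum_i W^{(i)})(S,T)| \le \eps\sqrt{k}\sqrt{d(S)d(T)}$ for all $S,T$. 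I should also check that $\D{W^{(i)}}$ is always of the form $D^{-1/2}\CUT(S,T,\alpha)D^{-1/2}$ so that subtracting it keeps $R^{(j)}$ in the image of $D^{-1/2}(\cdot)D^{-1/2}$ and the bookkeeping $R^{(j)} = \D{B - \sum_{i\le j}W^{(i)}}$ is consistent — this is immediate since $\D{\CUT(S,T,\alpha)} = \alpha\,(D^{-1/2}d_S)\otimes(D^{-1/2}d_T) = \alpha\,\sqrt{d_S}\otimes\sqrt{d_T}$.

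The main obstacle is purely one of bookkeeping between the three matrices $B$, $\D{B}$, and the cut matrices $\CUT(S,T,\alpha)$: one has to be careful that "testing $B$ at a cut $(S,T)$" and "testing $\D{B}$ at $(\sqrt{d_S},\sqrt{d_T})$" really are the same, and that the greedy correction we want to make to $\D{B}$ corresponds to a genuine cut matrix for $B$ with a real coefficient $\alpha$ — which it does, with $\alpha = \beta/(\norm{\sqrt{d_S}}\norm{\sqrt{d_T}}) = \beta/\sqrt{d(S)d(T)}$. Once that dictionary is set up, the argument is just the Frieze–Kannan potential decrement applied verbatim to $\D{B}$, and the factor $\sqrt{k} = \norm{\D{B}}_F$ appears because we measure the violation threshold in units of the Frobenius norm. (Note this proposition makes no algorithmic claim — only existence — so I do not need the approximate-maximum-cut-norm oracle here; that will be invoked separately when proving \autoref{thm:matrixdec}.)
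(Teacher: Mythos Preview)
Your proposal is correct and is essentially the same argument as the paper's: both run the Frieze--Kannan potential-decrement greedy procedure, choosing $\alpha = R^{(i)}(S,T)/(d(S)d(T))$ and showing the (squared) normalized Frobenius norm drops by at least $\eps^2 k$ per step, terminating in $\le 1/\eps^2$ rounds. The only difference is cosmetic---you carry out the computation on the normalized residual $R^{(j)} = \D{B - \sum_i W^{(i)}}$ directly, whereas the paper keeps the unnormalized residual $R^{(i)} = B - \sum_j W^{(j)}$ and applies the potential $h(R) = \norm{\D{R}}_F^2$ to it; your dictionary $B(S,T) = \langle\sqrt{d_S},\D{B}\sqrt{d_T}\rangle$ and $\D{\CUT(S,T,\alpha)} = \alpha\sqrt{d_S}\otimes\sqrt{d_T}$ makes the two viewpoints manifestly equivalent.
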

 \begin{proof}
 \def\X{R}
 Let $\X^{(0)}=B$. We use the potential function $h(\X):=\norm{\D{\X}}_F$.
We show that while $\norm{\X}_C > \eps\sqrt{k} m$, we can choose cut matrices iteratively while maintaining the invariant
that each time the value of the potential function decreases by at
least $\eps^2h(B)$. Since $h(R^{(0)}) = h(B)$,  after at most $1/\eps^2$ we obtain a good approximation of $B$. 

Assume that after $t<1/\eps^2$ iterations, $R^{(i)}=B-W^{(1)}-\ldots- W^{(i)}$. Suppose for some $S,T\subseteq V$, 
\begin{equation}
\label{eq:normdeccond} \left|\X^{(i)}(S,T)\right| >  \eps
h(B)\sqrt{d(S) d(T)}.
\end{equation}
%We show that we can add one more cut matrix and decrease the value of the potential function by at least $\eps^2h(B)^2$. This shows that after at most $\leq 1/\eps^2$ iterations \eqref{eq:normdeccond} does not hold for any $S,T\subseteq V$.
Choose $W^{(i+1)}=\CUT(S, T,\alpha)$,  for $\alpha=\X^{(i)}(S,T) \big/d(S) d(T)$, and let $R^{(i+1)} = R^{(i)} - W^{(i+1)}$. We have,
\begin{eqnarray*}
h(\X^{(i+1)}) - h(\X^{(i)}) = \sum_{u\in S, v\in T} \frac{(\X^{(i)}_{u,v}-\alpha d(u) d(v))^2-{\X^{(i)}_{u,v}}^2}{d(u)d(v)}
&=& -2\alpha\X^{(i)}(S,T) + \alpha^2 d(S)d(T)\\
&=& \frac{-\X^{(i)}(S,T)^2}{d(S)d(T)}  \leq  -\eps^2\Phi^2(B),
\end{eqnarray*}
where the second to last equation follows from the definition of $\alpha$, and the last equation follows from equation \eqref{eq:normdeccond}.
Therefore, after at most $\size\leq 1/\eps^2$ iterations, \eqref{eq:normdeccond} must
hold for all $S,T\subseteq V$. 
%Therefore,
%$$ \norm{B-W^{(1)}-\ldots-W^{(\size)}}_C = \norm{\X^{(\size)}}_C \leq \eps h(B) m \leq  \eps \norm{\D{B}}_F m = \eps \sqrt{k} m.\qedhere$$
\end{proof}

Although the previous proposition only proves the existence of a decomposition into cut matrices, we can construct such a decomposition efficiently using the following nice result of Alon and Naor~\cite{AN06} that gives a consant factor approximation algorithm for the cut norm of any matrix. 
\begin{theorem}[Alon and Naor \cite{AN06}]
\label{thm:cutnormapp}
There is a polynomial time randomized algorithm such that for any given $A\in \mathbb{R}^{V\times V}$, with high probability,
finds sets $S, T \subseteq V$, such that 
$$ |A(S,T)| \geq 0.56 \norm{A}_C. $$
\end{theorem}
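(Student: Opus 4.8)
The plan is to recognize the cut norm, up to a factor of $4$, as the $\ell_\infty\!\to\!\ell_1$ norm of a ``grounded'' matrix, to relax that quantity by the Grothendieck semidefinite program, and to round the SDP solution by Krivine's transformation. First I would pass from $A\in\mathbb{R}^{V\times V}$ to $\hat A\in\mathbb{R}^{V'\times V'}$ on $V'=V\cup\{0\}$, setting $\hat A_{ij}=A_{ij}$ for $i,j\in V$, $\hat A_{0j}=\sum_{i\in V}A_{ij}$, $\hat A_{i0}=\sum_{j\in V}A_{ij}$, and $\hat A_{00}=\sum_{i,j\in V}A_{ij}$. For $x,y\in\{\pm1\}^{V'}$ with $x_0=y_0=1$ and $S=\{i\in V:x_i=1\}$, $T=\{j\in V:y_j=1\}$, one checks $\sum_{u,v}\hat A_{uv}x_uy_v=\sum_{i,j\in V}A_{ij}(1+x_i)(1+y_j)=4\,A(S,T)$. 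Since flipping the sign of $x$ or of $y$ only flips the sign of this bilinear form, $\|\hat A\|_{\infty\to1}:=\max_{x,y\in\{\pm1\}^{V'}}\sum_{u,v}\hat A_{uv}x_uy_v$ equals $\max_{x,y}\,|\sum_{u,v}\hat A_{uv}x_uy_v|=4\,\max_{S,T}|A(S,T)|=4\|A\|_C$; conversely, from any $x,y$ with large $|\sum_{u,v}\hat A_{uv}x_uy_v|$ we recover, after re-signing so that $x_0=y_0=1$, sets with $4|A(S,T)|=|\sum_{u,v}\hat A_{uv}x_uy_v|$. So it suffices to nearly maximize the $\{\pm1\}$ bilinear form of $\hat A$.

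Next I would solve, in polynomial time (ellipsoid or interior-point, to an additive error absorbed in the slack below), the semidefinite relaxation $\mathrm{SDP}(\hat A):=\max\sum_{u,v}\hat A_{uv}\langle p_u,q_v\rangle$ over collections of unit vectors $\{p_u\}_{u\in V'},\{q_v\}_{v\in V'}$. Scalars $\pm1$ are feasible, so $\mathrm{SDP}(\hat A)\ge\|\hat A\|_{\infty\to1}=4\|A\|_C$. The core step is to round an (almost) optimal SDP solution $\{p_u^*\},\{q_v^*\}$ by Krivine's construction, which is the algorithmic half of Grothendieck's inequality. Set $\gamma:=\ln(1+\sqrt2)=\sinh^{-1}(1)$ and $K_G':=\pi/(2\gamma)$, so $1/K_G'=2\ln(1+\sqrt2)/\pi\approx0.5611>0.56$. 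Krivine's maps send $p_u^*,q_v^*$ to unit vectors $p_u',q_v'$ in a larger Hilbert space (using odd tensor powers: $p_u'=\bigoplus_{k\ \mathrm{odd}}(-1)^{(k-1)/2}\sqrt{\gamma^k/k!}\,(p_u^*)^{\otimes k}$, $q_v'=\bigoplus_{k\ \mathrm{odd}}\sqrt{\gamma^k/k!}\,(q_v^*)^{\otimes k}$; these are unit because $\sinh\gamma=1$), with the property that $\langle p_u',q_v'\rangle=\sin(\gamma\langle p_u^*,q_v^*\rangle)$ while the within-class products become $\sinh(\gamma\langle p_u^*,p_{u'}^*\rangle)$ and $\sinh(\gamma\langle q_v^*,q_{v'}^*\rangle)$. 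The $2|V'|\times 2|V'|$ Gram matrix $M$ of $\{p_u'\}\cup\{q_v'\}$ is thus obtained from that of $\{p_u^*\}\cup\{q_v^*\}$ by applying $\sinh(\gamma\cdot)$ on the diagonal blocks and $\sin(\gamma\cdot)$ on the off-diagonal block, and it is positive semidefinite (being a genuine Gram matrix), so I can Cholesky-factor it to get explicit realizing vectors in $\mathbb{R}^{2|V'|}$ and then apply hyperplane rounding: sample a standard Gaussian $g$ and set $x_u=\mathrm{sign}\langle g,p_u'\rangle$, $y_v=\mathrm{sign}\langle g,q_v'\rangle$. Because $|\gamma\langle p_u^*,q_v^*\rangle|\le\gamma<\pi/2$, the $\arcsin$ of hyperplane rounding undoes the $\sin$ exactly: $\mathbb{E}[x_uy_v]=\tfrac2\pi\arcsin\langle p_u',q_v'\rangle=\tfrac{2\gamma}{\pi}\langle p_u^*,q_v^*\rangle$, hence $\mathbb{E}\bigl[\sum_{u,v}\hat A_{uv}x_uy_v\bigr]=\tfrac1{K_G'}\mathrm{SDP}(\hat A)\ge\tfrac4{K_G'}\|A\|_C$.

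Finally I would extract the sets and amplify the probability. Re-signing a rounded pair $(x,y)$ so that $x_0=y_0=1$ leaves $|\sum_{u,v}\hat A_{uv}x_uy_v|$ unchanged and yields $S,T\subseteq V$ with $Z:=|A(S,T)|=\tfrac14|\sum_{u,v}\hat A_{uv}x_uy_v|$. By definition of the cut norm $Z\le\|A\|_C$ always, while $\mathbb{E}[Z]\ge\tfrac14\mathbb{E}[\sum_{u,v}\hat A_{uv}x_uy_v]\ge\tfrac1{K_G'}\|A\|_C>0.56\|A\|_C$. A reverse-Markov estimate, $\tfrac1{K_G'}\|A\|_C\le\mathbb{E}[Z]\le\mathbb{P}[Z\ge0.56\|A\|_C]\,\|A\|_C+0.56\|A\|_C$, gives $\mathbb{P}[Z\ge0.56\|A\|_C]\ge 1/K_G'-0.56>0$, a fixed positive constant; running the rounding $O(\log n)$ times (the SDP is solved once) and returning the best pair found pushes the failure probability below $1/\poly(n)$, within polynomial time overall.

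The hard part is the Krivine step: showing that the matrix $M$ obtained by applying $\sinh(\gamma\cdot)$ and $\sin(\gamma\cdot)$ entrywise to the SDP Gram matrix is positive semidefinite — so that the Cholesky factorization and hence the hyperplane rounding are legitimate — and that $\gamma=\sinh^{-1}(1)$ is precisely the constant for which the $\sin$ distortion of the Gram entries is canceled by the $\arcsin$ of hyperplane rounding, leaving the clean linear identity with ratio $2\ln(1+\sqrt2)/\pi$. This is the algorithmic content of Grothendieck's inequality; the positive semidefiniteness of $M$ drops out of the explicit vectors $p_u',q_v'$ displayed above (or, for the diagonal blocks, from the Schur product theorem, since $\sinh$ has nonnegative power-series coefficients). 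The remaining loose ends — solving the SDP only approximately, and rounding error in the Cholesky step — are routine and are absorbed in the gap between $1/K_G'\approx0.5611$ and the stated constant $0.56$.
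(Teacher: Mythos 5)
The paper does not prove this statement: it is quoted verbatim from Alon and Naor \cite{AN06} and used as a black box, so there is no internal proof to compare against. Your argument is a correct reconstruction of the Alon--Naor proof itself --- the grounding reduction identifying $4\|A\|_C$ with the $\{\pm1\}$ bilinear maximum of the augmented matrix $\hat A$, the Grothendieck semidefinite relaxation, Krivine's $\sin/\sinh$ tensor construction with $\gamma=\sinh^{-1}(1)$ yielding the ratio $2\ln(1+\sqrt2)/\pi\approx 0.5611$, and the reverse-Markov amplification --- and the details (unit norms via $\sinh\gamma=1$, positive semidefiniteness of the transformed Gram matrix so that Cholesky and hyperplane rounding apply, and the final constant exceeding $0.56$) all check out.
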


\noindent Now we are ready to prove \autoref{thm:matrixdec}.

\begin{proofof}{\autoref{thm:matrixdec}}
Let $\delta:=\eps/2$, and $B:=D^{1/2} T_\delta(\D{A})D^{1/2}$. By \autoref{lem:ABnorm}, we have that $\norm{A-B}_C \leq \eps m/2$. So we just need to approximate $B$ by a set of cut matrices within an additive error of $\eps m/2$. For a matrix $R$, let $h(R):=\norm{\D{R}}_F$. By \autoref{lem:froBnorm} we have $h(B) = \sqrt{k}$. 

Let $\eps':=\eps/\sqrt{4k}$. We use the proof strategy of \autoref{lem:cutdecexistence}. Let $R^{(i)} = B-W^{(1)}-\ldots-W^{(i)}$. If $\norm{R^{(i)}}_C \geq \eps' \sqrt{k} m$, then by \autoref{thm:cutnormapp} in polynomial time we can find $S,T\subseteq V$ such that
\begin{equation}
\label{eq:rstlowerbound}
 \left| R^{(i)}(S,T)\right| \geq \eps' \sqrt{k} m/2 \geq \eps' h(B) m/2. 
 \end{equation}
Choose $W^{(i+1)}=\CUT(S,T,\alpha)$, for $\alpha = R^{(i)}(S,T)/m^2$, and let $R^{(i+1)} = R^{(i)} - W^{(i+1)}$. We get,
$$ h(R^{(i+1)})-h(R^{(i)}) = -2\alpha R^{(i)}(S,T) + \alpha^2 d(S) d(T) \leq -\frac{R^{(i)}(S,T)^2}{m^2} \leq -\frac{\eps'^2 \Phi^2(B)}4.$$
Since $h(R^{(0)})=h(B)$, after $\size\leq 4/\eps'^2 = 16k/\eps^2$, we have $\norm{R^{(\size)}}_C\leq \eps'\sqrt{k}m$, which implies that 
$$\norm{A-W^{(1)}-\ldots-W^{(\size)}}_C\leq \norm{A-B}_C + \norm{B-W^{(1)}-\ldots-W^{(\size)}}_C \leq \eps  m.$$
This proves the correctness of the algorithm. It remains to upper bound $\alpha$.
For each cut matrix $W^{(i)}=\CUT(S,T,\alpha)$ constructed throughout the algorithm we have
\begin{eqnarray} 
|\alpha|=\frac{|R^{(i)}(S,T)|}{m^2} = \frac{1}{m^2}\left|\sum_{u\in S,v\in T} R^{(i)}_{u,v}\frac{\sqrt{d(u)d(v)}}{\sqrt{d(u)d(v)}}\right| &\leq& \frac{1}{m^2} \sqrt{\sum_{u\in S,v\in T} \frac{{R^{(i)}_{u,v}}^2}{d(u)d(v)}}\sqrt{d(S)d(T)} \label{eq:rstupperbound}\nonumber\\
&\leq& \frac{h(R^{(i)})}{m} \leq \frac{h(B)}{m} = \frac{\sqrt{k}}{m}.\nonumber
\end{eqnarray}
where the first inequality follows by the Cauchy-Schwarz inequality, the second
inequality uses $d(S),d(T)\leq m$, 
and the last  inequality follows by the fact that the potential function is decreasing
throughout the algorithm.
This completes the proof of theorem. 
%Furthermore, by equation \eqref{eq:rstlowerbound} we have,
%\begin{eqnarray*}
%\frac{\eps m}{4} \leq |R^{(i)}(S,T)| \leq h(B) \sqrt{d(S)d(T)} \leq  \sqrt{km d(S)}.
%\end{eqnarray*}
%where the second inequality holds by equation \eqref{eq:rstupperbound}.
%Hence, $d(S)\geq \eps^2 m/(16k)$. Similarly, we may obtain $d(T)\geq \eps^2m/(16k)$.
 \end{proofof}
 
 \section{Fast Approximation Algorithm for Low Threshold Rank Graphs}
\def\sb{\beta}
In this section we prove \autoref{thm:maxcut}. 
First, by \autoref{thm:matrixdec} in time $\poly(n, 1/\eps)$ we can find cut matrices $W^{(1)},\ldots,W^{(\size)}$ for $\size=O(k/\eps^2)$, such that for all $1\leq i\leq t$, $W^{(i)}=\CUT(S_i, T_i, \alpha_i)$,  $\alpha_i\leq \sqrt{k}/m$, and 
\begin{equation*}
 \norm{A-W}_C \leq \eps m/4,
 \end{equation*}
where $W:=W^{(1)}+\ldots+W^{(\sigma)}$. It follows from the above equation that for any set $S\subseteq V$,
\begin{equation}
\label{eq:cutapp}
|A(S,\overline{S}) - W(S,\overline{S})| = |A(S,\overline{S}) - \sum_{i=1}^\size \alpha_i d(S\cap S_i) d(\overline{S}\cap T_i)| \leq \frac{\eps m}{4}.
\end{equation}
Fix $S^*\subseteq V$ of size $d(S^*)=\Gamma$ (think of $(S^*,\overline{S^*})$ as the optimum cut), and let $s^*_i:=d(S_i\cap S^*)$, and $t^*_i:=d(T_i\cap \overline{S^*})$. Observe that by equation \eqref{eq:cutapp},
\begin{equation}
\label{eq:sitiapp}
\left| A(S^*,\overline{S^*}) - \sum_{i=1}^\size \alpha_i s^*_i t^*_i\right| \leq \frac{\eps m}{4}.
%\sum_{i=1}^\size W^{(i)}(S^*,\overline{S^*}) = \sum_{i=1}^\size \sum_{\substack{u\in S_i\cap S^*\\ v\in T_i\cap \overline{S^*}}} d(u)d(v)\alpha_i=\sum_{i=1}^\size s^*_i t^*_i \alpha_i.
\end{equation} 
Let $\alpha_{\max}:=\max_{1\leq i\leq \size} |\alpha_i|$. 
Let $\v := \lfloor \eps /(48\alpha_{\max}\size)  \rfloor$; observe that $\Delta=O(\eps^3m/k^{1.5})$. We define an approximation of $s^*_i,t^*_i$ by rounding them down to the nearest multiple of $\v$, i.e.,   $\os_i=\v \lfloor s^*_i/\v\rfloor$, and $\ot_i = \v \lfloor t^*_i/\v\rfloor$. We use $\os,\ot$ to denote the vectors of the approximate values. It follows that we can obtain a good approximation of the size of the cut $(S^*,\overline{S^*})$ just by guessing  the vectors $\os$,  and $\ot$. Since $|s^*_i-\os_i|\leq \v$ and $|t^*_i-\ot_i|\leq \v$, we get,
\begin{equation}
\label{eq:osiotiapp}
\sum_{i=1}^\size |s^*_it^*_i\alpha_i - \os_i\ot_i\alpha_i|  \leq \size\alpha_{\max} (2\v m + \v^2) \leq 3\alpha_{\max}\size\Delta \leq \eps m /16.
\end{equation}

Observe that by equations \eqref{eq:cutapp},\eqref{eq:sitiapp},\eqref{eq:osiotiapp}, if we know the vectors $\os,\ot$, then we can find $A(S^*,\overline{S^*})$ within an additive error of $\eps m/2$. 
Since $\os_i,\ot_i\leq m$, there are only $O(m/\v)$ possibilities for each $\os_i$ and $\ot_i$. Therefore, we afford to enumerate all possible values of them in time $(m/\v)^{2\size}$, and choose the one that gives the largest cut. Unfortunately, for a given assignment of $\os,\ot$ the corresponding cut $(S^*,\overline{S^*})$ may not exist.
Next we give an algorithm that for a given assignment of $\os,\ot$ finds a cut $(S,\overline{S})$ such that $A(S,\overline{S}) = \sum_i \os_i\ot_i\alpha_i\pm \eps m$, if one exists.

First we distinguish the large degree vertices of $G$ and simply guess which side they are mapped to in the optimum cut. For the rest of the vertices we use the solution of LP(1). Let $U:=\{v: d(v)\geq \Delta\}$ be the set of large degree vertices. Observe that $|U|\leq m/\Delta$. Let $\cP$ be the coarsest partition of the set $V\setminus U$ such that for any $1\leq i\leq \size$, both $S_i\setminus U$ and $T_i\setminus U$ can be written as a union of sets in $\cP$, and for each $P\in \cP$, $d(P)\leq \Delta$.   Observe that $|\cP|\leq 2^{2\size}+m/\Delta$. For a given assignment of $\os,\ot$, first  we guess the set of vertices in $U$ that are contained in $S^*$, $U_{S^*}:=S^*\cap U$, and $U_{\overline{S^*}}:=U\setminus U_{S^*}$.
For the rest of the vertices we use the linear program LP(1) to find the unknown $d(S^*\cap P)$. 
\begin{alignat}{6}
& \textup{LP(1)} \quad&& \quad&& \quad&& \quad&& \quad&& \nonumber \\
 & 0 \quad&& \leq \quad&& y_P \quad&& \leq \quad&& 1  \quad&& \forall P\in \cP \nonumber\\ 
& \Gamma-\eps m/2 \quad&& \leq  \quad&& \sum_{P} y_P d(P)+d(U_{S^*}) \quad&& \leq \quad&& \Gamma + \eps m/2 \quad&&           \label{eq:yexpsize}\\
& \os_i \quad && \leq \quad&& \sum_{P\subseteq S_i} y_Pd(P)+d(U_{S^*}\cap S_i) \quad&& \leq \quad&& \os_i+\v \quad&&  \forall 1\leq i\leq \size \label{eq:yosi}\\
& \ot_i \quad&& \leq \quad&& \sum_{P\subseteq T_i} (1-y_P)d(P)+d(U_{\overline{S^*}}\cap T_i) \quad&& \leq \quad&& \ot_i+\v \quad&& \forall 1\leq i\leq \size. \label{eq:yoti}
\end{alignat}
Observe that $y_P=\frac{d(S^*\cap P)}{d(P)}$ is a feasible solution to the linear program.
In the next lemma which is the main technical part of the analysis we show how to construct a set based on a given solution of the LP. 
\begin{lemma}
\label{lem:lprounding}
There is a randomized algorithm such that for any $S^*\subset V$, given $\os_i,\ot_i$ and $U_{S^*}$  returns a random set $S$ such that 
% $$ |\sum_{i=1}^\size d(S^*\cap S_i)d(\overline{S^*}\cap T_i)\alpha_i - \os_i\ot_i\alpha_i | \leq \eps m /2.$$
%\begin{alignat}{3}
%& \P{|d(S) - d(S^*)| \geq \eps m} \quad&&\leq \quad&& \frac{\eps}{8},\label{eq:expectedsize}\\
%& \left|\E{W(S,\overline{S})} -  A(S^*,\overline{S^*})\right| \quad&&\leq \quad&&  \frac{\eps m}2.
% \label{eq:expectedcut}
% \end{alignat}
\begin{eqnarray}
 \P{W(S,\overline{S}) \geq A(S^*,\overline{S^*})-\frac{3\eps m}4 \wedge |d(S) - \Gamma|\leq \eps m}&\geq &\frac{\eps}{10} \label{eq:wlarger}\\
  \P{W(S,\overline{S}) \leq A(S^*,\overline{S^*})+\frac{3\eps m}4 \wedge |d(S) - \Gamma|\leq \eps m}&\geq& \frac{\eps}{10}. \label{eq:wsmaller} 
\end{eqnarray}
%Moreover, such a set can be found in with high probability in time $\poly(n,1/\eps)$. 
\end{lemma}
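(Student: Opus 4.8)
The plan is to build $S$ by solving LP(1) — which is feasible for the guess induced by the optimal $S^*$, since $y_P=d(S^*\cap P)/d(P)$ verifies \eqref{eq:yexpsize}--\eqref{eq:yoti} — and then rounding its fractional solution. The quantitative input will be an $\ell_2$ bound on the cut coefficients hidden in the proof of \autoref{thm:matrixdec}: there each new cut matrix $\CUT(S_i,T_i,\alpha_i)$ lowers the potential $\norm{\D{R}}_F^2$ by at least $\alpha_i^2m^2$, and this potential starts at $k$ and stays nonnegative, so $\sum_{i=1}^{\size}\alpha_i^2m^2\le k$. This is what will turn a Cauchy--Schwarz step into a $\sqrt k$ loss rather than a $\size$ loss.

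First I would compute a basic feasible solution $y^*$ of LP(1); since LP(1) has only $2\size+1$ constraints besides the box constraints $0\le y_P\le1$, at most $2\size+1$ of the $y^*_P$ are fractional. To round, put $U_{S^*}$ on the $S$-side, $U\setminus U_{S^*}$ on the $\overline S$-side, and each remaining vertex $v$ into $S$ independently with probability $y^*_{P(v)}$, where $P(v)\in\cP$ is the part of $\cP$ containing $v$. Having pulled the large-degree vertices into $U$ guarantees $d(v)<\v$ for every remaining $v$ and $d(P)\le\v$ for every part, so every random sum below has increments at most $\v$. Put $a_i:=d(S\cap S_i)-\os_i$ and $b_i:=d(\overline S\cap T_i)-\ot_i$. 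By \eqref{eq:yosi} and \eqref{eq:yoti}, $\E{a_i},\E{b_i}\in[0,\v]$; by \eqref{eq:yexpsize}, $\E{d(S)}\in[\Gamma-\eps m/2,\Gamma+\eps m/2]$; and from $d(v)<\v$ we get $\mathrm{Var}(d(S))\le\v m$ together with, after summing over $i$ and using $\sum_i\I{v\in S_i}\le\size$, bounds on $\E{\sum_i a_i^2}$ and $\E{\sum_i b_i^2}$. Since $\sqrt{\v m}=o(\eps m)$ for the chosen $\v=O(\eps^3m/k^{1.5})$, a Hoeffding bound makes $|d(S)-\Gamma|\le\eps m$ hold with probability at least $1-\eps/100$.

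To pass from the marginals to the cut value, expand $d(S\cap S_i)\,d(\overline S\cap T_i)-\os_i\ot_i=\os_ib_i+\ot_ia_i+a_ib_i$, so that
$$W(S,\overline S)-\sum_i\alpha_i\os_i\ot_i=\sum_i\alpha_i\os_ib_i+\sum_i\alpha_i\ot_ia_i+\sum_i\alpha_ia_ib_i .$$
Using $\os_i\le m$ and then $\sum_i\alpha_i^2m^2\le k$,
$$\Big|\sum_i\alpha_i\os_ib_i\Big|\le\sqrt{\textstyle\sum_i\alpha_i^2\os_i^2}\cdot\sqrt{\textstyle\sum_i b_i^2}\le m\sqrt{\textstyle\sum_i\alpha_i^2}\cdot\sqrt{\textstyle\sum_i b_i^2}\le\sqrt k\cdot\sqrt{\textstyle\sum_i b_i^2},$$
and symmetrically for the second sum, while the third is lower order since $|\alpha_i|\le\sqrt k/m$. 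Plugging in the bounds on $\E{\sum_i a_i^2},\E{\sum_i b_i^2}$ and the choice $\v=\lfloor\eps/(48\alpha_{\max}\size)\rfloor$ (so that $\size\v\le\eps/(48\alpha_{\max})$), and applying Markov's inequality to $\sum_i a_i^2+\sum_i b_i^2$, one gets $|W(S,\overline S)-\sum_i\alpha_i\os_i\ot_i|\le\eps m/4$ with probability $\Omega(1)$; with \eqref{eq:sitiapp} and \eqref{eq:osiotiapp} this yields $|W(S,\overline S)-A(S^*,\overline{S^*})|\le3\eps m/4$ on that event. Intersecting it with the size event of the previous paragraph, and noting that the mean of $W(S,\overline S)$ lies within $\eps m/4$ of $A(S^*,\overline{S^*})$ — so that each of \eqref{eq:wlarger} and \eqref{eq:wsmaller} holds on a non-negligible (at least $\eps/10$, after tracking constants and the possibly poor concentration of $W(S,\overline S)$) fraction of the probability space — gives the lemma.

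The crux is the quantitative part just sketched: $\v$ must be taken small enough that the Cauchy--Schwarz estimates, fed with the variance of the randomized rounding, stay below $\eps m$ — this is the very $\v$ that forces $|U|\le m/\v$ and hence the $2^{O(m/\v)}=2^{\tilde O(k^{1.5}/\eps^3)}$ enumeration of $U_{S^*}$ in \autoref{thm:maxcut} — and it is the inequality $\sum_i\alpha_i^2m^2\le k$, rather than the pointwise $|\alpha_i|\le\sqrt k/m$, that makes the estimate close. Obtaining a clean enough bound on $\E{\sum_i a_i^2}$ and $\E{\sum_i b_i^2}$ (and, if one wants instead a two-sided concentration of $W(S,\overline S)$ directly, on $W(S,\overline S)$ itself), leaning on the high-degree-vertex removal, is where the real work lies.
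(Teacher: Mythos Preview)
Your main quantitative step does not close. You bound
\[
\Big|\sum_i \alpha_i \os_i b_i\Big|\;\le\; \sqrt{k}\,\sqrt{\sum_i b_i^2},
\]
and then want Markov on $\sum_i a_i^2+\sum_i b_i^2$ to make the right side $\le \eps m/4$. But with per-vertex (or per-block) rounding one has $\text{Var}(b_i)\le \v\, d(T_i)$ and hence $\E{\sum_i b_i^2}\lesssim \size\,\v\, m$; with the chosen $\v$ this is $\size\v m\approx \eps m^2/(48\alpha_{\max})$, so after Markov you are only guaranteeing
\[
\sqrt{k}\,\sqrt{\sum_i b_i^2}\ \lesssim\ \sqrt{k}\,\sqrt{\size\v m}\ \approx\ m\,\sqrt{\eps\sqrt{k}}\ =\ k^{1/4}\sqrt{\eps}\,m,
\]
which is far larger than $\eps m$ once $k\ge 1$ and $\eps$ is small. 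The $\ell_2$ identity $\sum_i\alpha_i^2 m^2\le k$ is correct (it is indeed implicit in the potential drop in the proof of \autoref{thm:matrixdec}), but it does not rescue the Cauchy--Schwarz step: you would need $\size\v\le \eps^2 m/k$, i.e.\ $\alpha_{\max}\ge k/(48\eps m)$, and there is no such lower bound on $\alpha_{\max}$.

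The paper sidesteps this entirely by bounding the \emph{expectation} of $W(S,\overline S)$ rather than seeking two-sided concentration. It rounds block-by-block (each $P\in\cP$ goes to $S$ with probability $y_P$), computes
\[
\E{W(S,\overline S)}=\sum_i \alpha_i s_i t_i-\sum_i\alpha_i\sum_{P}y_P(1-y_P)d(P)^2,
\]
and uses only the crude term-wise bound $|\alpha_i|\le\alpha_{\max}$ together with $|s_i-\os_i|,|t_i-\ot_i|\le\v$ and $\sum_P d(P)^2\le \v m$. Each of these produces an error of size $O(\size\,\alpha_{\max}\,\v\, m)=O(\eps m)$ by the very definition of $\v$, so $|\E{W(S,\overline S)}-A(S^*,\overline{S^*})|\le \eps m/2$. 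The probability lower bound $\eps/10$ then comes not from concentration of $W$ but from the deterministic envelope $W(S,\overline S)\le m$ (via \eqref{eq:cutapp}): once the conditional mean is within $5\eps m/8$ of $A(S^*,\overline{S^*})$, a one-sided averaging argument forces each of \eqref{eq:wlarger} and \eqref{eq:wsmaller}. Your parenthetical remark that ``the mean of $W(S,\overline S)$ lies within $\eps m/4$ of $A(S^*,\overline{S^*})$'' is in fact the whole argument; the Cauchy--Schwarz/Markov machinery, the basic-feasible-solution reduction, and the $\ell_2$ bound on the $\alpha_i$ are all unnecessary.
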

 \begin{proof}
 Let $y$ be a feasible solution of LP(1).
We use a simple independent rounding scheme to compute the random set $S$. 
We always include $U_{S^*}$ in $S$.  For each $P\in \cP$, we include $P$ in $S$, independently, with probability $y_P$. We prove that $S$ satisfies lemma's statements.
First of all, by linearity of expectation,
$$  \E{d(S \cap S_i)} = d(U_{S^*})+\sum_{P\subseteq S_i} y_P d(P), \text{ ~~and~~ } \E{d(\overline{S}\cap T_i)} = d(U_{\overline{S^*}})+\sum_{P\subseteq T_i} (1-y_P)d(P). $$

In the following two claims, first we show that with high probability the expected size of $d(S)$ 
is close to $\Gamma$. Then, we upper bound the expected value of $W(S,\overline{S})-A(S^*,\overline{S^*})$.
\begin{claim}
\label{cl:sizecloseopt}
$$ \P{|d(S) - d(S^*)| \geq \eps m} \leq  \frac{\eps}{8},\label{eq:expectedsize}$$
\end{claim}
\begin{proof}
We use the theorem of Hoeffding to prove the claim:
\begin{theorem}[Hoeffding Inequality]
\label{thm:hoeffding}
Let $X_1, \ldots, X_n$ be independent random variables such that for each $1\leq i\leq n$, $X_i\in [0,a_i]$. Let $X:=\sum_{i=1}^n X_i.$ Then, for any $\eps>0$
$$ \P{|X-\E{X}| \geq \eps} \leq 2\exp\left(-\frac{2\eps^2}{\sum_{i=1}^n a_i^2}\right).$$
\end{theorem}

Now, by the independent rounding procedure, we  obtain
\begin{eqnarray*}
 \P{|d(S) - \E{d(S)}| \geq \eps m/2} \leq 2\exp\left(-\frac{\eps^2m^2}{2\sum_P d(P)^2}\right)
\leq  2\exp\left(-\frac{\eps^2 m^2}{2m\Delta}\right) \leq 2\exp(-16/\eps) \leq \frac{\eps}{8}.
\end{eqnarray*}
where the third inequality follows by the fact that $d(P)\leq \Delta$ and $\sum_P d(P)\leq m$. 
The claim follows from the fact that  by \eqref{eq:yexpsize}, $|\E{d(S)} - \Gamma| \leq \eps m/2$.
\end{proof}

\begin{claim}
\label{cl:cutcloseopt}
$$ \left|\E{W(S,\overline{S})} -  A(S^*,\overline{S^*})\right| \leq \frac{\eps m}2.
$$
\end{claim}
\begin{proof}
First, observe that
\begin{eqnarray}
%\EE{R\sim \cD}{A(S,\overline{S})} &\geq& 
%\E{ \sum_{i=1}^{\size} d(S\cap S_i) d(\overline{S} \cap T_i) \alpha_i}  
\E{W(S,\overline{S})}&=&\E{ \sum_{i=1}^{\size} d(S\cap S_i) d(\overline{S} \cap T_i) \alpha_i}  \nonumber\\
%&=& \sum_{i=1}^\size \EE{R\sim \cD}{d(S\cap S_i) d(\overline{S} \cap T_i)\alpha_i} - \eps m/8\\
&=& \sum_{i=1}^\size \alpha_i \E{(\sum_{P\in\cP:P\subseteq S_i} d(P)\I{P\subseteq S})(  \sum_{Q\in\cP:Q\subseteq T_i}d(Q) \I{Q\subseteq \overline{S}})}\nonumber\\
&&+\sum_{i=1}^\size \alpha_i [d(U_{S^*}\cap S_i)\E{d(\overline{S}\cap T_i)}+ d(U_{\overline{S^*}}\cap T_i)\E{d(S\cap S_i)}] \nonumber\\
&=& \sum_{i=1}^\sigma \alpha_i \sum_{P\subseteq S_i,Q\subseteq T_i} d(P) d(Q) \E{\I{P\subseteq S}\I{Q\subseteq \overline{S}}}\nonumber\\
&&+\sum_{i=1}^\size \alpha_i \left\{d(U_{S^*}\cap S_i)t_i+d(U_{\overline{S^*}}\cap T_i)s_i\right\}. 
\label{eq:wmadesimple}
\end{eqnarray}
Since the event that $P\subseteq S$ is independent of $Q\subseteq S$, iff $P\neq Q$ we get
$$\E{\I{P\subseteq S}\I{Q\subseteq \overline{S}}} = \begin{cases}
y_P (1-y_Q) & \text{if} P\neq Q\\
0 & \text{otherwise}.
\end{cases}$$
Let $s_i:=\E{d(S\cap S_i)}$ and $t_i:=\E{d(\overline{S}\cap T_i)}$. 
Then, by \eqref{eq:wmadesimple} and above equation,
\begin{eqnarray}
\E{ W(S,\overline{S})}  &=& \sum_{i=1}^\size \alpha_i s_i t_i - \sum_{i=1}^\size \alpha_i\sum_{P\in \cP} y_P (1-y_P)d(P)^2.
\label{eq:expectedcutequality}
\end{eqnarray}
On the other hand, by equations \eqref{eq:yosi} and \eqref{eq:yoti}, for all $1\leq i\leq \size$, we get  $\os_i \leq s_i \leq \os_i+\v$ and  $\ot_i\leq t_i\leq \ot_i+\v$. Hence, similar to equation \eqref{eq:osiotiapp} we can show,
\begin{equation}
\label{eq:stsclose}
\sum_{i=1}^\size |\alpha_i s_it_i -  \alpha_i\os_i\ot_i| \leq \frac{\eps m}{8}.
\end{equation}

\noindent Therefore, using equation~\eqref{eq:sitiapp} we get
\begin{eqnarray*}
\left|\E{W(S,\overline{S})} - A(S^*,\overline{S^*})\right| &\leq & \Big| \E{ W(S,\overline{S})} - \sum_{i=1}^\size \alpha_i s^*_i t^*_i)\Big|+\frac{\eps m}{4}\\
&= & \Big| \sum_{i=1}^\size (\alpha_i s_i t_i -\alpha_i s^*_it^*_i) -\sum_{i=1}^\size\sum_{P\in\cP} \alpha_i y_P(1-y_P)d(P)^2\Big| +\frac{\eps m}{4}\\
&\leq & \sum_{i=1}^\size |\alpha_i s_it_i -  \alpha_i\os_i\ot_i| + \sum_{i=1}^\size |\alpha_i\os_i\ot_i - \alpha_i s^*_i t^*_i| + \sigma \alpha_{\max} m\Delta +\frac{ \eps m}{4}
 \leq \frac{\eps m}{2},
\end{eqnarray*}
where the  equality follows by \eqref{eq:expectedcutequality}, the second  inequality follows by the fact that $d(P)\leq \Delta$ for all $P\in \cP$ and $\sum_P d(P) \leq m$, and the last inequality follows by  equations \eqref{eq:stsclose} and  \eqref{eq:osiotiapp}. This proves the claim.
 \end{proof}
 
 Now we are ready finish the proof of \autoref{lem:lprounding}. Here, we prove \eqref{eq:wlarger}. Equation \eqref{eq:wsmaller} can be proved similarly.  By \autoref{cl:cutcloseopt},
 \begin{eqnarray*} A(S^*,\overline{S^*}) -\frac{\eps m}{2} \leq \E{W(S, \overline{S})} &\leq& \E{W(S,\overline{S}) ~|~ |d(S) - \Gamma|\leq \eps m} + m \P{|d(S) - \Gamma| > \eps m} \\
&\leq & \E{W(S,\overline{S}) ~|~ |d(S) - \Gamma|\leq \eps m} + \frac{\eps m}{8}.
\end{eqnarray*}
where the second inequality holds by the fact that the size of any cut in $G$ is at most $m/2$,
thus by \eqref{eq:cutapp} for any $S\subseteq V$, $W(S,\overline{S})\leq \eps m/4+m/2\leq m$,
and the last inequality follows by \autoref{cl:sizecloseopt}. Hence,
\begin{eqnarray*} 
\E{W(S,\overline{S}) ~|~ |d(S) - \Gamma|\leq \eps m} \geq A(S^*,\overline{S^*})-\frac{5\eps m}{8}
\end{eqnarray*}
Since $W(S,\overline{S})\leq m$,
\begin{eqnarray*}
\P{W(S,\overline{S}) \geq A(S^*,\overline{S^*})-\frac{3\eps m}4 ~\Big|~ |d(S) - \Gamma|\leq \eps m}\geq \frac{\eps}{8} 
\end{eqnarray*}
Therefore, \eqref{eq:wlarger} follows by an application of \autoref{cl:sizecloseopt}.
 \end{proof}
 
Our  rounding algorithm is described in Algorithm \ref{alg:setselection}. 
First, we prove the correctness, then we calculate the running time of the algorithm.
Let $S$ be the output set of the algorithm. First, observe that we always have $|d(S) - \Gamma|\leq \eps m$. Now let $A(S^*,\overline{S^*})$ be the maximum cut among all sets of size $\Gamma$ (the minimization case can be proved similarly). In the iteration that the algorithm correctly guesses $\os_i,\ot_i, U_{S^*}$, there exists a feasible solution $y$ of LP(1).
by \autoref{lem:lprounding}, for all $1\leq i\leq 10/\eps$,
$$ \P{W(R_y(i),\overline{R_y(i)}) \geq A(S^*,\overline{S^*})-\frac{3\eps m}4 \wedge |d(R_y(i)) - \Gamma|\leq \eps m}\geq \frac{\eps}{10} 
 $$
Since we take the best of $10/\eps$ samples,   with probability $1/e$ the output set $S$ satisfies
$W(S,\overline{S})\geq A(S^*,\overline{S^*})-3\eps m/4$. Therefore, by \eqref{eq:cutapp},
$A(S,\overline{S})\geq A(S^*,\overline{S^*})-\eps m$.
This proves the correctness of the algorithm.

 \begin{algorithm}
 \begin{algorithmic}
\FORALL {possible values of $\os_i,\ot_i$, and $U_{S^*}\subseteq U$ }
\IF {there is a feasible solution $y$ of LP(1)}
\FOR {$i=1 \to 10/\eps$}
\STATE $R_y(i)\leftarrow U_{S^*}$.
\STATE For each $P\in \cP$ include $P$ in $R_y(i)$, independently, with probability $y_P$.
\ENDFOR
\ENDIF
\ENDFOR
\RETURN among all sets $R_y(i)$ sampled in the loop that satisfy $|d(R_y(i))-\Gamma|\leq \eps m$, the one  that $W(R_y(i),\overline{R_y(i)})$ is the maximum.
%\RETURN $\argmax_{S_y} A(S_y,\overline{S_y})$. % (resp. $\argmin_{S_y} A(S_y,\overline{S_y})$).
 \end{algorithmic}
 \caption{Approximate Maximum Cut $(S,\overline{S})$ such that $d(S)=\Gamma\pm \eps m$}
 \label{alg:setselection}
 \end{algorithm}
 
 It remains to upper-bound the running time of the algorithm. First observe that if $|U|=O(k/\eps^2)$, the running time of the algorithm is dominated by the time it takes to compute a feasible solution of LP(1). Since the size of LP is $2^{\tilde{O}(k/\eps^2)}$, in this case  Algorithm~\ref{alg:setselection} terminates in time $2^{\tilde{O}(k/\eps^2)}$. Note that
 for any sample set $R_y(i)$, both $d(R_y(i))$ and $W(R_y(i),\overline{R_y(i)})$ can be computed
 in time $2^{\tilde{O}(k/\eps^2)}$, once we know $|R_y(i)\cap P|$ for any $P\in \cP$.
 
Otherwise if $|U|\gg k/\eps^2$, the dependency of the running time of the algorithm to $\eps, k$ is dominated by the step where we guess the subset of $U_{S^*} = U\cap S^*$. Since $\alpha_{\max}\leq \sqrt{k}/m$ and $\size = O(k/\eps^2)$, we get 
 $$|U|\leq \frac{m}{\Delta} \leq \frac{12 m\alpha_{\max}\size}{\eps} =O\left(\frac{k^{1.5}}{\eps^3} \right).$$
 Therefore,  Algorithm~\autoref{alg:setselection} runs in time $2^{\tilde{O}(k^{1.5}/\eps^3)}$.
 Since it takes $\poly(n,k,1/\eps)$ to compute the decomposition into $W^{(1)},\ldots,W^{(\sigma)}$, 
 the the total running time is $2^{\tilde{O}(k^{1.5}/\eps^3)}+\poly(n)$. 
 This completes the proof of \autoref{thm:maxcut}.
\bibliographystyle{alpha}
\bibliography{references}

\begin{thebibliography}{ABS10}

\bibitem[ABS10]{ABS10}
Sanjeev Arora, Boaz Barak, and David Steurer.
\newblock Subexponential algorithms for unique games and related problems.
\newblock In {\em FOCS}, pages 563--572, 2010.

\bibitem[AN06]{AN06}
Noga Alon and Assaf Naor.
\newblock Approximating the cut-norm via grothendieck's inequality.
\newblock {\em SIAM J. Comput.}, 35(4):787--803, April 2006.

\bibitem[BRS11]{BRS11}
Boaz Barak, Prasad Raghavendra, and David Steurer.
\newblock Rounding semidefinite programming hierarchies via global correlation.
\newblock In {\em FOCS}, pages 472--481, 2011.

\bibitem[FK96]{FK96}
Alan~M. Frieze and Ravi Kannan.
\newblock The regularity lemma and approximation schemes for dense problems.
\newblock In {\em FOCS}, pages 12--, Washington, DC, USA, 1996. IEEE Computer
  Society.

\bibitem[FK99]{FK99}
Alan~M. Frieze and Ravi Kannan.
\newblock Quick approximation to matrices and applications.
\newblock {\em Combinatorica}, 19(2):175--220, 1999.

\bibitem[GS11]{GS11}
Venkatesan Guruswami and Ali~Kemal Sinop.
\newblock Lasserre hierarchy, higher eigenvalues, and approximation schemes for
  graph partitioning and quadratic integer programming with psd objectives.
\newblock In {\em FOCS}, pages 482--491, 2011.

\bibitem[GS12]{GS12}
Venkatesan Guruswami and Ali~Kemal Sinop.
\newblock Faster sdp hierarchy solvers for local rounding algorithms.
\newblock In {\em FOCS}, 2012.

\bibitem[GS13]{GS13}
Venkatesan Guruswami and Ali~Kemal Sinop.
\newblock Lasserre sdps, l\_1-embeddings, and approximating non-uniform
  sparsest cut via generalized spectra.
\newblock In {\em SODA}, 2013.

\bibitem[GW95]{GW95}
Michel~X. Goemans and David~P. Williamson.
\newblock Improved approximation algorithms for maximum cut and satisfiability
  problems using semidefinite programming.
\newblock {\em J. ACM}, 42(6):1115--1145, November 1995.

\bibitem[Kol11]{Kolla11}
Alexandra Kolla.
\newblock Spectral algorithms for unique games.
\newblock {\em Comput. Complex.}, 20(2):177--206, June 2011.

\bibitem[KT07]{KT07}
Alexandra Kolla and Madhur Tulsiani.
\newblock Playing unique games using graph spectra.
\newblock manuscript, 2007.

\bibitem[Sze78]{Szemeredi78}
Endre Szemer{Ž}di.
\newblock Regular partitions of graphs.
\newblock In {\em Problmes Combinatoires et ThŽorie des Graphes}, volume 260
  of {\em Colloq. Internat. CNRS}, pages 399--401. CNRS, Paris, 1978.

\end{thebibliography}
\end{document}